\newcommand{\Qed}{\hbox{ }\hfill\rule{2.5mm}{3mm}\medskip}
\def\RR{{\mathbb{R}}}
\theoremstyle{definition}
\newtheorem{theorem}{Theorem}[section]
\newtheorem{lemma}[theorem]{Lemma}
\newtheorem{definition}[theorem]{Definition}
\title{Finding long simple paths in a weighted digraph using pseudo-topological orderings}
\author{Miguel Raggi \thanks{Research supported in part by PAPIIT IA106316, UNAM. \\ Orcid ID: 0000-0001-9100-1655.} \\ \texttt{mraggi@gmail.com}}
\affil{Escuela Nacional de Estudios Superiores \\ Universidad Nacional Autónoma de México}
\date{}
\begin{document}

\maketitle

\begin{abstract}
Given a weighted digraph, finding the longest path without repeated vertices is well known to be NP-hard. Furthermore, even giving a reasonable (in a certain sense) approximation algorithm is known to be NP-hard. In this paper we describe an efficient heuristic algorithm for finding long simple paths, using an hybrid approach of heuristic depth-first search and pseudo-topological orders, which are a generalization of topological orders to non acyclic graphs, via a process we call ``opening edges''.
\end{abstract}

\textbf{Keywords:} long paths, graphs, graph algorithms, weighted directed graphs, long simple paths, heuristic algorithms.

\section{Introduction}

We focus on the following problem: Given a weighted digraph $D=(V,E)$ with weight $w:E\to \RR^+$, find a \emph{simple} path with high weight. The \emph{weight} of a path is the sum of the individual weights of the edges belonging to the path. A path is said to be \emph{simple} if it contains no repeated vertices.

Possible applications of this problem include motion planning, timing analysis in a VLSI circuit and DNA sequencing.

The problem of finding long paths in graphs is well known to be NP-hard, as it is trivially a generalization of HAMILTON PATH. Furthermore, it was proved by Björklund, Husfeldt and Khanna in \cite{bjorklund} that the longest path cannot be aproximated in polynomial time within $n^{1-\varepsilon}$ for any $\varepsilon > 0$ unless $P=NP$. 

While LONGEST SIMPLE PATH has been studied extensively in theory for simple graphs (for example in \cite{scutella2003approximation}, \cite{zhang2007algorithms}, \cite{Pham2012}), not many efficient heuristic algorithms exist even for simple undirected graphs, much less for weighted directed graphs. A nice survey from 1999 can be found at \cite{scholvin1999approximating}. A more recent comparison of 4 distinct genetic algorithms for approximating a long simple path can be found in \cite{portugal2010study}. 

An implementation of the proposed algorithm won the Oracle MDC coding competition in 2015. In the problem proposed by Oracle in the challenge ``Longest overlapping movie names'', one needed to find the largest concatenation of overlapping strings following certain rules, which could be easily transformed to a problem of finding the longest simple path in a directed weighted graph. The graph had around 13,300 vertices.

Our contribution lies in a novel method of improving existing paths, and an efficient implementation of said method. The proposed algorithm consists of two parts: finding good candidate paths using heuristic DFS and then improving upon those candidates by attempting to either replace some vertices in the path by longer subpaths--or simply insert some subpaths when possible--by using pseudo-topological orders.

The full C++ source code can be downloaded from 

\begin{center}
	\url{http://github.com/mraggi/LongestSimplePath}.
\end{center}

In Section~\ref{sec:prelim} we give some basic definitions. We describe the proposed algorithm in Section~\ref{sec:algo}. Finally, we give some implementation details and show the result of some experimental data in Section~\ref{sec:impl}.

It should be noted that for this particular problem it is generally easy to quickly construct somewhat long paths, but only up to a point. After this point even minor improvements get progressively harder.

\section{Preliminaries}\label{sec:prelim}
	
\begin{definition}
	A \emph{directed acylic graph} (or DAG) $D$ is a directed graph with no (directed) cycles.
\end{definition}

In a directed acyclic graph, one can define a partial order $\prec$ of the vertices, in which we say $v\prec u$ iff there is a directed path from $v$ to $u$.

\begin{definition}
	A \emph{topological ordering} for a directed acyclic graph $D$ is a total order of the vertices of $D$ that is consistent with the partial order described above. In other words, it is an ordering of the vertices such that there are no edges of $D$ which go from a ``high'' vertex to a ``low'' vertex.
\end{definition}

\begin{figure}[ht]
	\begin{center}
		\includegraphics[scale=1]{./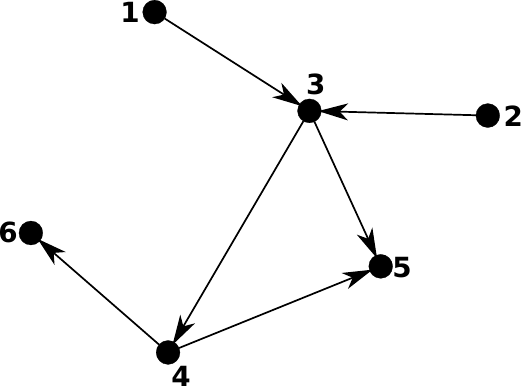}$\qquad$ \includegraphics[scale=1]{./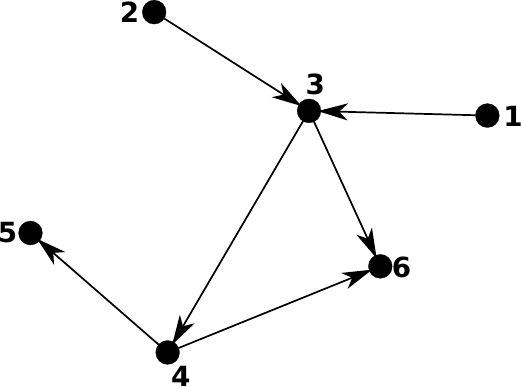}
	\end{center}
\caption{Two different topological orders of the same digraph}
\end{figure}

\begin{definition} 
	Given a digraph $D$, a \emph{strongly connected component} $C$ is a maximal set of vertices with the following property: for each pair of vertices $x,y \in C$, there exists a (directed) path from $x$ to $y$ and one from $y$ to $x$. A \emph{weakly connected component} is a connected component of the associated simple graph. 
\end{definition}

\begin{definition}
	Given a digraph $D$, the \emph{skeleton} $S$ of $D$ is the graph constructed as follows: The vertices of $S$ are the strongly connected components of $D$. Given $x,y \in V(S)$, there is an edge $x\to y$ iff there exists $a\in x$ and $b \in y$ such that $a\to b$ is an edge of $D$.
\end{definition}

It can be observed that $S$ is always a directed acyclic graph. 

\begin{definition}
Denote by $\overline{v}$ the connected component of $D$ which contains $v$. Given a vertex $v$, we define the \emph{out-rank} of $v$ as the length of the longest path of $S$ that \emph{starts} at $\overline{v}$. Similarly, we define the \emph{in-rank} of $v$ as the length of the longest path of $S$ which \emph{ends} at $\overline{v}$.
\end{definition}

\subsection{Longest simple path on DAGs}\label{subsection:lspdag}

In case that the digraph is a acyclic, a well-known algorithm that uses dynamic programming can find the optimal path in $O(n)$ time.

We describe Dijkstra's algorithm adapted to finding the longest simple path in a DAG. As this algorithm is an essential building block of the algorithm described in Section~\ref{pso}, we add a short description here for convenience. For a longer discussion see, for example, \cite{sedgewick2011algorithms}.

 \begin{enumerate}
	\item Associate to each vertex $v$ a real number $x[v]$, which will end up representing the weight of the longest simple path that \emph{ends} at $v$.
	\item Find a topological ordering of the vertices of $D$.
	\item In said topological order, iteratively set $x[v]$ to the max of $x[p]+w(p\to v)$ where $p\to v$ is an edge, or 0 otherwise.
	\item Once we have $x[v]$ for every vertex $v$, reconstruct the path by backtracking, starting from the vertex $v$ with the highest $x[v]$.
\end{enumerate}

\vspace{1em}

In more detail,

\begin{algorithm}[H]
\caption{Longest simple path in a DAG}\label{algo:lspdag}
\begin{algorithmic}
	\Require{A DAG $D$ with weight function $w:E(D)\to \RR$.}
	\Ensure{A longest simple path $P$ in $D$.}
	\Function{LSP\_DAG}{$D$}
	\State $x$ is an array of size $|V(D)|$, initialized with zeroes.
	\State Find a topological order $T$ for $V(D)$
	\For{$v \in T$}
		\State $x[v] := \text{max}\{x_p+w(p\to v)\ :\ p\to v\}$
	\EndFor
	\State $v := \text{argmax}(x)$
	\State $P := $ path with only $v$
	\State $T' := \text{reverse}(T)$
	\While{$x[v] \neq 0$}
		\State $u := $ an in-neighbor of $v$ for which $x[u] + w(u\to v) = x[v]$
		\State Add $u$ to the \textit{front} of $P$
		\State $v := u$
	\EndWhile
	\State
	\Return $P$
	\EndFunction
\end{algorithmic}
\end{algorithm}

This algorithm is simple to implement and efficient. Its running time is $O(E+V)$, where $E$ is the number of edges of $D$.

\section{The Algorithm}\label{sec:algo}

In what follows we shall assume we have preprocessed the graph and found the weakly connected components, the strongly connected components, and have fast access to both the outgoing edges and incoming edges for each vertex. As we may perform the following algorithm on each weakly connected component, without loss of generality assume $D$ is weakly connected.

Our proposed algorithm has two main parts: In the first part we find long paths using heuristic depth first search, choosing in a smart way which vertices to explore first, and in the second part we discuss a strategy to improve the paths obtained in the first part. Since the idea based on DFS is somewhat standard or straightforward, the main contribution of this paper lies in the ideas presented in the second part.

\subsection{Depth-first search}\label{DFS}
We describe a variation on depth-first search (DFS). 

The standard way of implementing a depth-first search is to either use a stack (commonly refered to as the \emph{frontier} or \emph{fringe}) of vertices to store unexplored avenues, or to use recursive calls (effectively using the callstack in \textit{lieu} of the stack).

If the graph is not acyclic, DFS may get stuck on a cycle. The standard way of dealing with this problem, when one simply wishes to see every vertex (and not every simple path, as in our case), is to store previously explored vertices in a data structure that allows us to quickly check if a vertex has been explored before or not.

However, for the problem of finding the longest simple path, it's not enough to simply ignore previously explored vertices, as we may need to explore the same vertex many times, as we may arrive at the same vertex from many different paths, and these need to be considered separately. Thus, for this problem, it is not possible to backtrack to reconstruct the path, as in many other problems.

This could be solved simply by modifying DFS slightly: make the \emph{frontier} data structure containing paths instead of only vertices. However, storing paths uses a large amount of memory and all the extra allocations might slow down the search considerably.

We propose a faster approach that results from only modifying a single path inplace. This is very likely not an original idea, but an extensive search in the literature did not reveal any article that considers depth-first search in this manner. Probably because for most problems the recursive or stack implementations are quite efficient, as they only need to deal with stacks of vertices and not stacks of paths. 

In this approach, instead of maintaining a stack of unexplored avenues, assume for each vertex the outgoing edges are sorted in a predictable manner. Later we will sort the outgoing edges in a way that explores the vertices with high probability of producing long paths first, but for now just assume any order that we know. Since this approach modifies the path in place, always make a copy of the best path found so far before any operation that might destroy the path.

Furthermore, assume we have a function $\textsc{NextUnexploredEdge}(P,\{u,v\})$ that takes as input a path $P$ and an edge $\{u,v\}$, in which the last vertex of $P$ is $u$, and returns the next edge $\{u,w\}$ in the order mentioned above for which $w \notin P$. This can be found using binary search, or even adding extra information to the edge, so that each edge remembers its index in the adjacency list of the first vertex. If there is no such edge, the function should return $null$. If no parameter $\{u,v\}$ is provided, it should return the first edge $\{u,w\}$ for which $w\notin P$.

We will construct a single path $P$ and modify it repeatedly by adding vertices to the back of $P$.

\begin{algorithm}[H]
\caption{Next Path in a DFS manner}\label{algo:nextpath}
\begin{algorithmic}[0]
	\Require{A weighted digraph $D$ and a path $P$, which will be modified in place}
	\Ensure{Either $done$ or $not\_done$}
	\Function{NextPath}{$P$}
		\State $last :=$ last vertex of $P$
		\State $t := \Call{NextUnexploredEdge}{P}$
		\While{$t = null$ and $|P| > 1$}
			\State $last := $ last vertex of $P$
			\State Remove $last$ from $P$
			\State $newLast := $ last vertex of $P$
			\State $t := \Call{NextUnexploredEdge}{P,\{newLast,last\}}$
		\EndWhile{}
		\If{$t = null$}
			\State{\textbf{return} $done$}
		\EndIf{}
		\State Add $t$ to the back of $P$
		\State{\textbf{return} $not\_done$}
	\EndFunction
\end{algorithmic}
\end{algorithm}

By repeatedly applying this procedure we can explore every path that starts at a given vertex in an efficient manner, but there are still too many possible paths to explore, so we must call off the search after a specified amount of time, or perhaps after a specified amount of time has passed without any improvements on the best so far.

Finally, we can do both forward and backward search with a minor modification to this procedure. So once a path cannot be extended forward any more, we can check if it can be extended backward. We found experimentally that erasing the the first few edges of the path before starting the backward search often produces better results.

\subsubsection{Choosing the next vertex}\label{subsec:choosing}

We give the details for efficiently searching forward, as searching backward is analogous.

So we are left with the following two questions: At which vertex do we start our search at? And then, while performing DFS, which vertices do we explore first? That is, how do we find a ``good'' order of the outgoing edges, so that good paths have a higher chance of being found quickly?

The first question is easily answered: start at vertices with high out-rank.

To answer the second question, we use a variety of information we collect on each vertex before starting the search:

\begin{enumerate}
	\item The out-rank and in-rank.
	\item The (weighted) out-degree and (weighted) in-degree.
	\item A \emph{score} described below.
\end{enumerate}

Once we find the score of each vertex, order the out-neighbors by rank first and score second, with some exceptions we will mention below. The score should not depend on any path, only on local information about the vertex, and should be fast to calculate.

Formally, let $k$ be a constant (small) positive integer. For each vertex $v$, let $A_k(v)$ be the sum of weights of all paths of length $k$ starting at $v$. For example, $A_1(v)$ is the weighted out-degree.

Given a choice of parameters $a_1, a_2, ..., a_k \in \RR^+$, construct the (out) score for vertex $v$ as
	$$\text{score}_{out}(v) = \sum_{i=1}^k a_iA_i(v)$$

Intuitively, the score of each vertex tries to heuristically capture the number (and quality) of paths starting at that vertex. High score means more paths start at a vertex.
	
When performing forward search, perhaps counter-intuitively, giving high priority to vertices with low score (as long as it is not 0) consistently finds better paths than giving high priority to vertices with high score. The reason for this is that exploring vertices with low score first means \emph{saving} the good vertices--those with high scores--for later use, once more restrictions are in place. Low scoring vertices are usually quickly discarded if there is no out, and so by leaving vertices with high degree for later, when the path is longer and so there are more restrictions about which vertices can be used, makes sense. An exception is if a vertex has degree 0. In this case, we give the vertex a low priority, as no further paths are possible.

Another exception is to give higher priority to vertices with very low indegree (for example, indegree 1), since if they are not explored in a path when first finding their parents, they will never be used again later in the path.

In addition, we also use the in-degree information in an analogous way.

\subsection{Pseudo-topological order}\label{pso}

The idea behind the second part of the algorithm is to try to improve paths by either inserting some short paths in-between or replacing some vertices by some short paths in an efficient way that covers both.

We begin by introducing some definitions.

\begin{definition}
	Given a digraph $D$, a \emph{weak pseudo-topological ordering} $\prec$ of the vertices of $D$ is a total order of the vertices in which whenever $x\prec y$ and there is an edge $y\to x$, then $x$ and $y$ are in the same strongly connected component.
\end{definition}

In other words, a weak pseudo-topological order is a total order that is consistent with the partial order given by the skeleton.

\begin{definition}
	Given a digraph $D$, a \emph{strong pseudo-topological ordering} $\prec$ of the vertices of $D$ is a total order of the vertices in which whenever $x\prec y$ and there is an edge $y\to x$, every vertex in the interval $[x,y]$ is in the same strongly connected component.
\end{definition}

\begin{figure}[ht]
	\begin{center}
	\includegraphics[scale=1]{./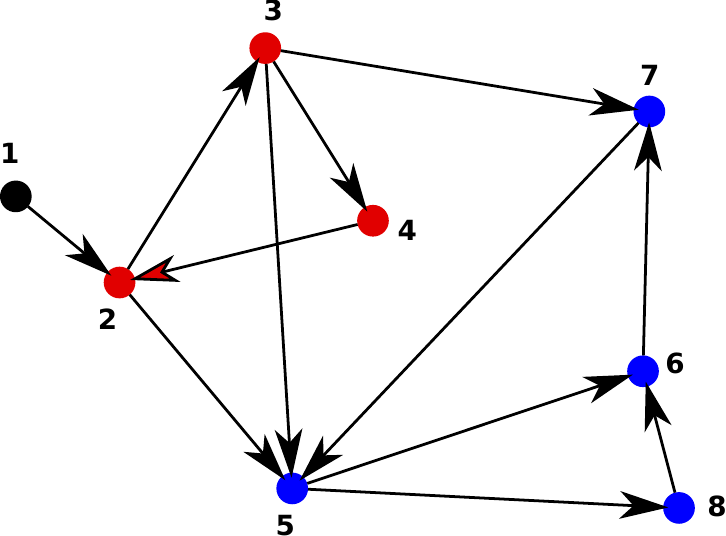}
\end{center}
\caption{A (strong) pseudo-topological ordering}
\end{figure}

In other words, a strongly pseudo-topological order is a weakly connected component in which the strongly connected components are not intermixed.

From here on, whenever we mention a pseudo-topological ordering, we mean a strong pseudo-topological ordering.

An easy way to get a random pseudo-topological ordering is to get a random topological ordering of the skeleton of the graph, and then ``explode'' the strongly connected components, choosing a random permutation of vertices in each component.

We can think of a pseudo-topological ordering as a topological ordering of the digraph in which we \emph{erase} all edges that go from a ``high'' vertex to a ``low'' vertex, thus considering an acyclic subdigraph of the original. We call this graph the \emph{subjacent DAG} of the pseudo-topological order. Thus, we may apply Algorithm~\ref{algo:lspdag} to this acyclic digraph and find its longest simple path.

As can be expected, the results obtained in this fashion are very poor compared to even non heuristic recursive slow depth-first search. However, if we combine the two approaches we get better paths.

\subsubsection{Combining the two approaches}

\begin{definition}
Given a path $P$ and a pseudo-topological ordering $T$, the \emph{imposition} of $P$ on $T$ is a pseudo-topological ordering $T_P$ which every vertex not in $P$ stays in the same position as in $T$. The vertices in $P$ are permuted to match the order given by $P$.
\end{definition}

For example, say we start with path $P = 3 \to 1 \to 5 \to 8$. Consider any pseudo-topological ordering, say, $T=(\textbf{1},\textbf{8},7,4,\textbf{3},6,\textbf{5},2)$. Then imposing the order defined by $P$ into $T$ gives rise to $T'=(\textbf{3},\textbf{1},7,4,\textbf{5},6,\textbf{8},2)$.

\begin{lemma}
$T_P$ as constructed above is also a (strong) pseudo-topological order. 
\end{lemma}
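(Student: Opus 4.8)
The plan is to work with the characterization, already implicit in the discussion preceding the statement, that a total order is a strong pseudo-topological order if and only if (i) the vertices of each strongly connected component occupy a contiguous interval, and (ii) no edge runs from a later component to an earlier one, i.e.\ the induced order on the components of the skeleton $S$ is a topological order. I would first record this equivalence: for the nontrivial ``only if'' direction of (i), suppose some component $C$ had $a \prec b \prec c$ with $a,c \in C$ but $b \notin C$; walking along a directed path $c = u_0 \to \cdots \to u_k = a$ lying inside $C$, some consecutive pair satisfies $u_i \succ b \succ u_{i+1}$, so $u_i \to u_{i+1}$ is a backward edge straddling $b$, and the strong condition forces all of $[u_{i+1},u_i]$—in particular $b$—into the component of $u_i$, namely $C$, a contradiction. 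Granting this, it suffices to check (i) and (ii) for $T_P$, and I will in fact prove the stronger and cleaner statement that \emph{each strongly connected component occupies exactly the same set of positions in $T_P$ as in $T$}. Since $T$ is a strong pseudo-topological order, inheriting identical component-intervals yields both contiguity and skeleton-consistency for $T_P$ at once.

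Two structural facts about the path $P = v_1 \to \cdots \to v_m$ drive the argument. First, the $P$-vertices lying in a fixed component $C$ form a \emph{contiguous block} $v_p,\ldots,v_q$ in the path order: if $v_i,v_j \in C$ with $i<j$, then the subpath $v_i \to \cdots \to v_j$ closed up by a return path inside $C$ is a cycle, so every $v_i,\ldots,v_j$ lies in $C$. Second, the components that $P$ \emph{visits} are linearly ordered identically by $P$ and by $T$: if $P$ reaches $C'$ before $C''$, there is a directed $D$-path from $C'$ to $C''$, hence $C' \prec C''$ in the skeleton, and since $T$ (being strong, hence weak) respects the skeleton, $C'$'s interval precedes $C''$'s in $T$; as any two visited components are thereby comparable, the two induced orders coincide.

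With these in hand I would verify the main claim componentwise. Fix $C$ with $T$-interval $I_C$. The non-$P$ vertices do not move, so they contribute the same $C$-positions to $T_P$ as to $T$. The $P$-slots falling inside $I_C$ form a consecutive range of slot-indices $\{a,\ldots,b\}$ (the slots increase), and in $T$ they carry precisely the $P$-vertices of $C$, so $b-a = q-p$. Counting the $P$-vertices lying in components that precede $C$ gives $a-1$ on the $T$ side and $p-1$ on the path side; by the coincidence of the two visited-component orders these are the same set, so $a-1 = p-1$, whence $a=p$ and $b=q$. Therefore the imposition places $v_a,\ldots,v_b = v_p,\ldots,v_q$—exactly the $P$-vertices of $C$—into the slots of $I_C$, so $C$ occupies $I_C$ in $T_P$ as well, completing the claim and hence the lemma.

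The step I expect to be the genuine obstacle is the alignment $a=p$: the definitions only guarantee that $P$ and $T$ are each \emph{individually} consistent with the skeleton, and a priori the imposed $P$-vertices could be shuffled into the wrong components' intervals. The crux is therefore the second structural fact—that a directed path forces pairwise comparability of the components it meets, pinning the visited-component order to be common to $P$ and $T$—together with the bookkeeping that converts this into the equality of counts. Everything else (the characterization and the contiguous-block observation) I regard as routine.
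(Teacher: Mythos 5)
Your proof is correct and follows essentially the same route as the paper's: both reduce the lemma to showing that the imposition keeps every vertex inside its own strongly connected component's interval, and both derive this from the fact that $P$ must traverse the components in the order consistent with the skeleton (your Fact 2 is exactly the paper's observation that there is no $i<j$ with $s(p_i) > s(p_j)$). The only difference is one of detail: you make explicit the interval characterization of strong pseudo-topological orders and the slot-counting argument ($a=p$) that the paper leaves implicit in the phrase ``no vertex can jump out of their strongly connected component.''
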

\begin{proof}
As $T$ is a strong pseudo-topological order, consider $S_1$, $S_2$, ... , $S_c$ the strongly connected components in the order they appear in $T$. Denote by $s(v)$ index of the strongly connected component of $v$. It suffices to prove that the vertices only move inside their own strongly connected components when going from $T$ to $T_P$.

\begin{figure} \label{fig:sccinpso}
\begin{center}
\includegraphics[scale=0.7]{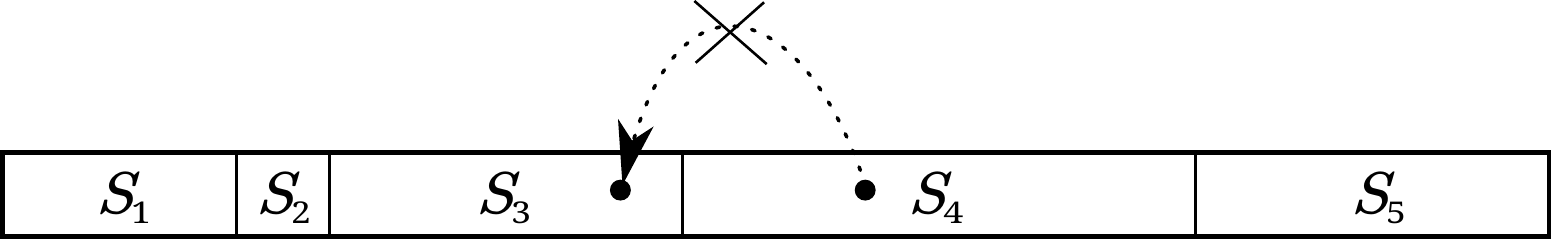}
\end{center}
\caption{No backward edges can jump between strongly connected components.}
\end{figure}

Let $(p_1,p_2,...,p_k)$ be the vertices of path $P$. Note that there is no $i < j$ for which $s(p_i) > s(p_j)$ since this would mean there exists a path from a vertex in $S_{s(p_i)}$ to a vertex in $S_{s(p_j)}$, but if $s(p_i) > s(p_j)$, no such path is possible in a pseudo-topological order, since this violates the order in the skeleton. This means that when imposing the order of $P$ into $T$ to get $T_P$, no vertex can jump out of their strongly connected component, and thus $T_P$ is also a strong pseudo-topological order.
\end{proof}

The previous lemma ensures that we may run algorithm \ref{algo:lspdag} with order $T_P$, and get a path that is \emph{at least as good} (and hopefully better) as path $P$, since all edges of $P$ remain in the subjacent DAG of $T_P$.

If after applying this technique we do find an improved path $P'$, we can repeat the process with $P'$, by again taking a random pseudo-topological ordering, imposing the order of $P'$ on this new ordering, and so on, until there is no more improvement.

The idea then is to construct long paths quickly with DFS and then use these paths as starting points for imposing on random pseudo-topological orders.

This approach does indeed end up producing moderately better paths than only doing DFS, even when starting from scratch with the trivial path and a random pseudo-topological order, albeit taking longer. However, we can do better.

\subsubsection{Opening the edges}

Again, we are in the setting where we have a path $P$ which we wish to improve.

Now, instead of just imposing path $P$ on multiple random pseudo-topological orders to find one that gets an improvement, construct orders as follows: Pick an edge $p_i\to p_{i+1}$ of path $P$ and construct a random pseudo-topological order that is consistent with $P$ and furthermore, for which $p_i$ and $p_{i+1}$ are as far apart as possible.

\begin{figure}[ht]
	\begin{center}
		\includegraphics[scale=0.8]{./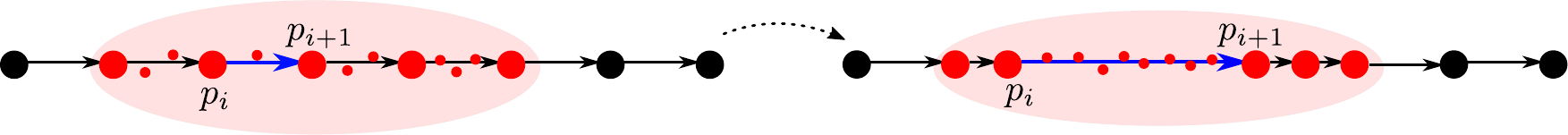}
	\end{center}
\caption{The process of opening an edge.}
\end{figure}

This is achieved by putting all vertices not in $P$ in all strongly connected components between $\overline{p_i}$ and $\overline{p_j}$ in between $p_i$ and $p_j$. In the figure above, the ``large'' vertices are vertices in $P$ and the ``small'' vertices are all other vertices not in $P$ in the same connected component as $p_i$ and $p_{i+1}$. If $p_{i}$ and $p_{i+1}$ are not in the same connected component, then, place \emph{every} vertex in either connected component, and also every vertex that belongs in a connected component between the component of $p_i$ and the component of $p_{i+1}$ between the two vertices, in such a way that the order is still a strong pseudo-topological order.

We may repeat this process for each edge in $P$.

The process of opening an edge is relatively expensive, since we must run Algorithm~\ref{algo:lspdag} each time. 

We now make an attempt at explaining why opening edges works as well as it does. Consider:
\begin{enumerate}
	\item If there exist a vertex $v$ that can be inserted into $P$, opening the corresponding edge finds this improvement.
	\item If there exists a vertex $v$ that can replace a vertex $p$ in $P$ and make the path longer (by means of edge weights), this process will find it when opening the edge to either side of $p$.
	\item Any (small) path of size $k$ that can be inserted into $P$, perhaps even replacing a few vertices of $P$, has probability at least $1/k!$ of being found if the corresponding vertices in the small path happen to be in the correct order.
\end{enumerate}

In the next section we try to heuristically maximize the probability that inserting or replacing paths will be found.

\subsubsection{Opening the edges eXtreme}

In the previous section, when opening up each edge, we put all the remaining vertices in the same connected component in a random order (consistent with pseudo-topological orders). We now consider the question of which order to give to those unused vertices. We discuss three different approaches: one heuristic that is quick and practical, one powerful and somewhat slow, and one purely theoretical using sequence covering arrays but which provides some theoretical guarantees.

Let $B$ be the inbetween vertices (\emph{i.e.} vertices between $p_i$ and $p_{i+1}$ when opening this edge). Since every other vertex will remain in their place, we face the problem of giving $B$ an ordering with a good chance of delivering an improvement. We only consider orders of $B$ that leaves the total order a strongly pseudo-topological order. That is, we only permute the vertices of $B$ within their own strongly connected components of the full graph.

Consider the induced subdigraph on $B$:
\begin{center}
	\includegraphics[scale=1.2]{./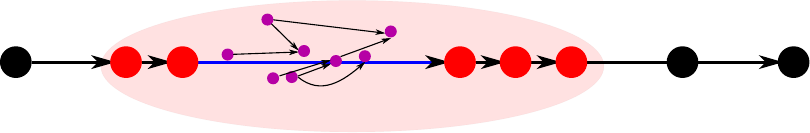}
\end{center}

Recall that running Algorithm~\ref{algo:lspdag} with a pseudo-topological order is equivalent to finding the longest path on the digraph that results by erasing all the edges that go backward. Therefore, we must consider \emph{only} orders of $B$ that are themselves pseudo-topological orders of the induced subgraph on $B$.

We describe three approaches to choosing an order of $B$.

\subsubsection{The powerful approach}\label{powerful}
The ``powerful'' approach is to recursively repeat the process on the induced subgraph on $B$. That is, repeat for $B$ the whole process of finding the strongly connected components, performing DFS as described in a previous section, finding suitable pseudo-topological orders, opening their respective edges, and recursively doing the same for the induced subgraphs in order to find good pseudo-topological orders and so on.

The problem with this approach is that with any standard cache-friendly graph data structure we would need to reconstruct the induced digraph (\emph{i.e.} rename the vertices of $B$ but keep track of their original names), and the whole process is slow. Of course, we would only need to do this process once per connected component and then we can use the results for each edge of the path.

The advantage of this approach is that we are precisely recursively finding good pseudo-topological orders of $B$, which means it's likely many long paths can be inserted in our original $P$.

\subsubsection{The heuristic approach}\label{heuristic}
Instead of attempting to repeat the whole algorithm on the induced subgraph, we try to mimic heuristically its results. Consider the following operation on the inbetween vertices: pick a vertex $u$ at random, and exchange its position with some out-neighbor $v$ of $u$ which appeared before $u$ in the pseudo-topological order. If no such neighbor exists, simply find pick another $u$.

Repeat this operation (which is quite inexpensive) as many times as the time limit allows. The following theorem ensures that this process will likely end in a (weak) pseudo-topological order.

\begin{theorem}
	With probability approaching 1 as the number of times approaches infinity, the order constructed above is a weak pseudo-topological order of the induced subgraph $B$.
\end{theorem}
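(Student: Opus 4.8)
The plan is to attach to the process a potential built from the skeleton of $B$ and show it can only decrease, with strict decrease exactly when a genuine inter-component violation is repaired. First I would fix, once and for all (for the analysis only, not the algorithm), a topological ordering of the skeleton of $B$, and for each vertex $x$ let $\sigma(x)$ denote the position of its strongly connected component $\overline{x}$ in this fixed ordering. Then every edge $x\to y$ of $B$ satisfies $\sigma(x)\le\sigma(y)$, with equality precisely when $\overline{x}=\overline{y}$, and with strict inequality $\sigma(x)<\sigma(y)$ for inter-component edges. For the current order $v_1\prec\cdots\prec v_n$ (where $n=|B|$), I would define the potential $\Phi$ to be the number of inversions of the sequence $(\sigma(v_1),\dots,\sigma(v_n))$, i.e.\ the number of pairs $i<j$ with $\sigma(v_i)>\sigma(v_j)$. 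Two observations set up the argument: an order is a weak pseudo-topological order iff it has no \emph{inter-component backward edge} (an edge $u\to v$ with $v\prec u$ and $\overline{u}\neq\overline{v}$); and $\Phi=0$ forces a weak pseudo-topological order, although the converse fails, since intermixed components can contribute inversions without creating any backward edge.

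The heart of the proof is the claim that a single swap never increases $\Phi$, and strictly decreases it whenever the swapped edge is inter-component. Writing $a=\sigma(u)$, $b=\sigma(v)$ for the transposed pair (with $v\prec u$, so $v$ occupies the lower position), the swap only reorders $u$, $v$, and the block $W$ of vertices lying strictly between them, leaving every other pair's relative order intact. For one in-between vertex $w$ with $c=\sigma(w)$, a short check shows that the net change in inversions from the two pairs $\{u,w\}$ and $\{v,w\}$ equals $\operatorname{sgn}(a-c)+\operatorname{sgn}(c-b)$. When the swapped edge is intra-component we have $a=b$, this expression vanishes identically, and the pair $\{u,v\}$ is never an inversion, so $\Phi$ is unchanged; when the edge is inter-component we have $a<b$, the expression $\operatorname{sgn}(a-c)+\operatorname{sgn}(c-b)$ is $\le 0$ for every $c$, and the pair $\{u,v\}$ turns from an inversion into a non-inversion, contributing $-1$, whence $\Delta\Phi\le -1$. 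I expect this case analysis to be the main, if elementary, obstacle: because the swap transposes two vertices that may be far apart, one must confirm that all the collateral pairs through the in-between block never undo the gain, which is exactly what the sign identity certifies.

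With monotonicity established, the conclusion would follow from Borel--Cantelli. Since $\Phi$ is a non-negative integer that drops by at least $1$ at every inter-component swap, the total number of inter-component swaps over the entire (infinite) run is at most $\Phi_0$, hence finite. Let $A_N$ be the event that the order before step $N$ is \emph{not} a weak pseudo-topological order; then an inter-component backward edge $u^\ast\to v^\ast$ is available, and under the mild assumption that $u$ is chosen uniformly and each eligible out-neighbor receives positive probability, the step picks $u=u^\ast$ and then $v^\ast$ with probability at least $1/n^2$, performing an inter-component swap. Hence $\Pr[\text{inter-component swap at step }N]\ge n^{-2}\Pr[A_N]$, and summing gives $n^{-2}\sum_N\Pr[A_N]\le \mathbb{E}[\#\,\text{inter-component swaps}]\le\Phi_0<\infty$, so $\sum_N\Pr[A_N]<\infty$ and therefore $\Pr[A_N]\to 0$, which is precisely the stated convergence. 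This route also clarifies why only a \emph{weak} order is claimed and why no literal absorption is required: an intra-component swap can momentarily manufacture an inter-component backward edge, but every such lapse is an occurrence of some $A_N$, and these occur only finitely often almost surely.
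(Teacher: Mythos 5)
Your proof is correct, and its core is the same device the paper uses: a nonnegative integer potential that counts out-of-order pairs, never increases under a swap, and strictly drops on every inter-component swap. The difference lies in the choice of potential and in how far the probabilistic tail is carried. The paper counts \emph{bad pairs}: pairs $(a,b)$ with $a$ before $b$ in the current order but with a path from $b$ to $a$ and none from $a$ to $b$ (i.e.\ $\overline{b}$ strictly below $\overline{a}$ in the skeleton's reachability order); monotonicity is proved by matching each pair $(a,u)$ that turns bad to a pair $(v,a)$ that simultaneously turns good, using the swapped edge $u\to v$ to compare the components of $a$, $u$, $v$. You instead fix a topological linearization $\sigma$ of the skeleton and count inversions of $(\sigma(v_1),\dots,\sigma(v_n))$; since comparable components are ordered consistently by $\sigma$, every bad pair in the paper's sense is an inversion in yours, and your monotonicity check collapses to the sign identity $\operatorname{sgn}(a-c)+\operatorname{sgn}(c-b)\le 0$ when $a<b$, which is a more mechanical computation than the paper's case matching. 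The more substantial difference is at the end: the paper proves only the no-increase claim and merely asserts that the potential decreases with positive probability, leaving the limit statement entirely implicit, whereas you make it rigorous --- each failure event $A_N$ forces an inter-component swap with probability at least $n^{-2}$, the number of such swaps is deterministically at most $\Phi_0$, hence $\sum_N \Pr[A_N]<\infty$ and $\Pr[A_N]\to 0$. You also flag a subtlety the paper never addresses: intra-component swaps can re-create inter-component backward edges, so the weak pseudo-topological orders are not an absorbing set and a summability argument of this kind is genuinely needed, not just a ``positive probability of progress'' remark.
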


\textbf{Proof:} For any digraph, given a total order of the vertices, call a pair of vertices $(a,b)$ \emph{bad} if $a$ appears before $b$ in the order, there is a path from $b$ to $a$ but not one from $a$ to $b$. In other words, if the strongly connected component which contains $b$ is (strictly) less than the strongly connected component which contains $a$ in the partial order of the skeleton of $D$.

Thus, we only need to prove that the number of bad pairs never increases after an operation, and that with some positive probability, it decreases.

Suppose we do an operation, and $u$ is the chosen vertex, which is exchanged with $v$ (so $u$ was after $v$ in the order, but there is an edge from $u$ to $v$). Then only pairs of vertices of the form $(a,u)$ and $(v,a)$ could have changed status (and only when $a$ is between $u$ and $v$ in the order).

Let $U,V,A$ be the strongly connected components containing $u,v,a$ respectively. If $A$ is before $U$, then indeed the pair $(a,u)$ is now bad after the exchange, but since we are assuming there is an edge from $u$ to $v$, if $A$ is before $U$, then it is before $V$, and so the pair $(v,a)$ used to be bad, but is now good. When $(v,a)$ becomes bad, the process is analogous, and $(a,u)$ becomes good. \Qed

The above process then gives a random approximate algorithm to calculate weak pseudo-topological orders without calculating the strongly connected components.

\subsubsection{The theoretical approach}\label{theoretical}

Denote $E(P)$ the edge set of a path $P$.

For a positive integer $k$, we wish to find $P$ that is maximal in the sense that if $Q$ is another path for which $|E(Q) \setminus E(P)| \leq k$, then the total weight of $Q$ is less than or equal to the total weight of $P$.

Given the edge opening process, this problem can be reduced to the following: we wish for a minimal set of permutations of $S_n$ for which every $k$-subset of $n$ appears in every possible order. The idea is to try an edge opening for every edge in the path with the order of $B$ given by each element of the set of permutations.

This problem has been worked on by Colbourn \emph{et al.} on \cite{chee2013sequence} and \cite{murray2014sequence}, where they named any such set of permutations a \emph{sequence covering array}. They give a mostly impractical algorithm for finding covering arrays, that works in practice up to $n\approx 100$. However, an easy probabilistic argument yields that taking $\Theta(\log(n))$ permutations randomly gives a non-zero chance of ending up with a covering array. This suggests that merely taking many random permutations would yield (probabilistically) the desired result. Unsurprisingly, this approach is not nearly as efficient as the other two.

For $k = 2$, however, a covering array is easy to find: take any permutation and its reverse. So by opening every edge and taking any permutation of the inbetween vertices and its reverse, we ensure the found path is optimal in this very limited sense: There exists no other path with higher total weight all whose edges, except one, are a subset of the edges of $P$ plus one more.

\section{Some details about the implementation}\label{sec:impl}

\subsection{Preprocessing the graph}

The data structure we use for storing the graph is a simple pair of adjacency lists: one for the out-neighbors and one for the in-neighbors, so we can access either efficiently. The vertices are numbered from $0$ to $n-1$ and we store an array of size $n$, each element of which is an array storing both the neighbors of the corresponding vertex and the corresponding edge-weights.

Next, we find connected components. While it is true that finding the weakly connected components, strongly connected components and skeleton might require some non-trivial processing, this pales in comparison to the NP-hard problem of finding the longest path. An efficient implementation (for example, using C++ boost graph library) can find the weakly and strongly connected components on graphs with millions of edges in a fraction of a second.

Then, we find the out-heuristic and the in-heuristic scores for each vertex, as described in Section~\ref{DFS}, and sort the neighbors of each vertex according to the heuristic.

In our experiments, the whole preprocessing step took about 0.2 seconds on the Oracle graph, which has $\sim$13,300 vertices. This time includes reading the file of strings, figuring out which concatenations are possible and constructing the graph. Experiments with randomly generated graphs of comparable size take less than 0.1 seconds if the graph is already in memory.

If one has a training set of a class of graphs, one could use some rudimentary machine learning to try to find the optimal parameters so that on average good paths are found. In fact, for the contest, we did just that, which provided a slight boost. The code includes a trainer for this purpose, but experimental results on the benefits of this are sketchy at best and do not (usually) warrant the long time it takes to train.

\subsection{Pseudo-Topological orders}

Once we have a pseudo-topological order $T$, we construct its inverse for fast access, so in addition of being able to answer the query ``which vertex is in position $i$ of $T$?'' we can also answer the query ``at which position is vertex $v$ in $T$?'' efficiently. Therefore, any operation we do on $T$ must be mirrored to the inverse.

In addition, since we are constantly changing $T$ and having to rerun Algorithm~\ref{algo:lspdag}, it is worth it to store $x_v$ for each $v$, and just reprocess from the first vertex whose order was modified and onwards.

Fortunately, when performing the edge opening process, much of the order has been preserved, so we can use this information and recalculate from the first modification onwards, speeding up the calculation considerably.

Finally, opening the edges is just a matter of rearranging the vertices to satisfy the condition, which is straightforward. We found experimentally that the process finds good paths faster if the edges of the path are opened in a random order and not sequentially. This makes intuitive sense. If a path cannot be improved by a certain edge opening, it's unlikely (but possible) an edge that is near will yield an improvement.

Our implementation of the ``powerful'' approach described in Section~\ref{powerful} was by constructing a completely new graph and running the algorithm on the subgraph recursively, and so it was prohibitely slow, although it did tend to find longer subpath insertions with fewer edge openings. Perhaps this can be improved. The implementation of the heuristic approach of Section~\ref{heuristic} was considerably more efficient over the random approach described in Section~\ref{theoretical}.

\section{Experimental data}\label{experiments}

We compare this algorithm to one other found in the literature, by Portugal \textit{et. al.} \cite{portugal2010study}, as the authors have kindly provided us with the code. There is a scarcity of heuristic algorithms for this problem, and the code for some, such as \cite{scholvin1999approximating} appears to have been lost, so a direct comparison turns out to be impractical. Unfortunately, an extensive literature search did not provide any other accessible source code for this problem, making the code in the link of the introduction the only open source and readily available implementation we are aware of that heuristically finds long simple paths. 

In \cite{portugal2010study}, the authors compare four approaches based on genetic algorithms. The biggest graph they used as an example consists of 134 vertices. The result was that their fastest algorithm was able to find the optimal solution more than half of the time in around 22 seconds. For comparison, our program only took 0.001 seconds for the same graph and found the longest path on 100\% of the test runs. Please bear in mind that the comparison might be somewhat unfair, since their implementation was in Matlab instead of C++. Our algorithm took less than a millisecond for all the graphs in \cite{portugal2010study} and found the longest simple path 100\% of the test runs.

\subsection{Tests in large random graphs where we know the longest path size}

Given $n$ and $m$, consider the following graph generation process for a graph with $n$ vertices and $m$ edges.

Consider any random permutation of the vertices $v_1, v_2, ..., v_n$ and add all edges $v_i \to v_{i+1}$. Then pick $n-m+1$ other edges uniformly at random. All edge weights were set to 1, so we know for certain the longest simple path has size $n-1$.

For example, in our experiments, for $n = 10,000$ and $m = 100,000$, the whole process (including reading the file containing the graph) took on average 1.28 seconds to find the longest simple path.

% \begin{center}
% 	\includegraphics{vsportugal.png}
% \end{center}

\vspace{0.3cm}

\section{Acknowledgements}

We would like to thank the organizers of the Oracle MDC coding challenge for providing a very interesting problem to work on (and for the prize of a drone and camera, of course). Furthermore, we would like to thank the other participants, specially Miguel Ángel Sánchez Pérez and David Felipe Castillo Velázquez for the fierce competition. Also, we are grateful to Marisol Flores and Edgardo Roldán for their helpful comments on the paper, as well as David Portugal for providing the source code from their work. This research was partially supported by PAPIIT IA106316.

\vspace{-0.3cm}

\providecommand{\bysame}{\leavevmode\hbox to3em{\hrulefill}\thinspace}
\providecommand{\MR}{\relax\ifhmode\unskip\space\fi MR }
% \MRhref is called by the amsart/book/proc definition of \MR.
\providecommand{\MRhref}[2]{%
\href{http://www.ams.org/mathscinet-getitem?mr=#1}{#2}
}
\providecommand{\href}[2]{#2}

\end{document}